\newcommand{\myfig}[4][ht]{
\begin{figure}[#1]
\centering
\includegraphics[#2]{#3}
\caption{#4\label{#3}}
\end{figure}
}
\newcommand{\R}{\mathbb R}
\newcommand{\PP}{\mathbb P}
\newcommand{\be}{\mathbf{E}}
\newcommand{\cf}{\mathcal F}
\newcommand{\cl}{\mathcal L}
\newcommand{\al}{\alpha}
\newcommand{\lp}{\left(}
\newcommand{\rp}{\right)}
\newcommand{\lc}{\left[}
\newcommand{\rc}{\right]}
\newtheorem{theorem}{Theorem}[section]
\theoremstyle{remark}
\theoremstyle{remark}
\newcommand{\bean}{\begin{eqnarray*}}
\newcommand{\eean}{\end{eqnarray*}}
\newcommand{\ben}{\begin{enumerate}}
\newcommand{\een}{\end{enumerate}}
\newcommand{\beq}{\begin{equation}}
\newcommand{\eeq}{\end{equation}}
\begin{document}

\title[Relativistic stable processes in heat conduction]{Relativistic stable processes in quasi-ballistic heat conduction in thin film semiconductors}

\address{Prakash Chakraborty: Department of Statistics,
  Purdue University,
  150 N. University Street,
  W. Lafayette, IN 47907,
  USA.}
\email{chakra15@purdue.edu}

\address{Bjorn Vermeersch:  CEA - Grenoble, France.
{Current address:} imec, Kapeldreef 75, 3001 Leuven, Belgium.}
\email{bjorn.vermeersch@imec.be}

\address{Ali Shakouri:  Birck Nanotechnology Center, 
Purdue University, 
West Lafayette, IN 47907, 
USA.}
\email{shakouri@purdue.edu}

\address{Samy Tindel: Department of Mathematics,
  Purdue University,
  150 N. University Street,
  W. Lafayette, IN 47907,
  USA.}
\email{stindel@purdue.edu}
\thanks{S. Tindel is supported by the NSF grant  DMS-1613163}

\author[P. Chakraborty, B. Vermeersch, A. Shakouri, S. Tindel]
{Prakash Chakraborty, Bjorn Vermeersch, Ali Shakouri, Samy Tindel}

\begin{abstract}
In this article, we show how relativistic alpha stable processes can be used to explain quasi-ballistic heat conduction in semiconductors. This is a method that can fit experimental results of ultrafast laser heating in alloys. It also provides a connection to a rich literature on Feynman-Kac formalism and random processes that transition from a stable L\'evy process on short time and length scales to the Brownian motion at larger scales. This transition was captured by a heuristic truncated L\'evy distribution in earlier papers. The rigorous Feynman-Kac approach is used to derive sharp bounds for the transition kernel. Future directions are briefly discussed.
\end{abstract}

\maketitle

\section{Introduction}
Standard heat flow, at a macroscopic level, is modeled by the random erratic movements of Brownian motions starting at the source of heat. However, this diffusive nature of heat flow predicted by Brownian motion is not seen in certain materials (semiconductors, dielectric solids) over short length and time scales \cite{1}. 
Experimental data portraying the non-diffusive behavior of heat flow has been observed for transient thermal grating (TTG) \cite{4,3}, time domain thermoreflectance (TDTR) \cite{7} and others \cite{6,5}, by altering the physical size of the heat source. The thermal transport in such materials is more akin to a superdiffusive heat flow, and necessitates the need for processes beyond Brownian motion to capture this heavy tail phenomenon.  Recent works \cite{15, 18, 17, 16, 20, 14, 19} try to explain the physics behind the quasiballistic heat dynamics. But these methods, driven mostly by the Boltzmann transport equation, are infeasible for processing experimental data. Some more recent studies \cite{26, 27} try to explain the non-diffusive heat flow through hyperbolic diffusion equations, however, closer investigation shows that these methods fail to capture the inherent onset of nondiffusive dynamics at short length scales in periodic heating regimes. 

The attempts mentioned above fail to provide a stochastic process that would explain the heat dynamics under short length-time regimes. The most natural stochastic process to explain a superdiffusive behavior is an alpha-stable L\'evy process \cite{Applebaum}. Alpha-stable L\'evy processes differ from Brownian motion in that its movements are governed by stable distributions as compared to Gaussian distributions for the latter. In this context, some of us have tried to explain the heat flow dynamics through a "truncated L\'evy distribution" approach \cite{PRBlevy1,PRBlevy2}, where it has been possible to extract the value of the L\'evy superdiffusive coefficient $\alpha$ that regulates the alloy's quasiballistic heat dynamics. 

The current contribution can be seen as a  further step in this direction. Specifically, let $T(t,x)$ designate the temperature of a semiconductor or dielectric solid in the experimental settings alluded to above, with initial condition $T_{0}(x)$. Then we shall describe $T$ through the following Feynman-Kac formula (see Section~\ref{sec:chr-FK} for more details about Feynman-Kac representations):
\beq\label{eq:FK}
T(t,x)
=
\be\lc T_{0}(x+X_{t})  \rc,
\eeq
where $X$ is a well-defined L\'evy process that captures the observed quasiballistic heat dynamics, in addition to being a good candidate for explaining the usual diffusive nature under non-special large length-time regimes. We shall see that such a process $X$ can be chosen as a so-called  relativistic stable process (see \cite{ryznar}, and \cite{carmona} for properties related to the relativistic Schr\"odinger operator). It possesses the remarkable property of behaving like an alpha-stable process under short length-time scales while being closer to Brownian motion otherwise. This is reflected in the estimates of the transition density, provided below in Section~\ref{sec:estimates}. Summarizing, our result lays the mathematical foundations of heat flow modeling on short time scales by means of stochastic processes. In addition, in spite of the fact that our computations are mostly one-dimensional, the model we propose allows natural generalizations to multidimensional and multilayer settings.

\section{Relativistic stable process: a primer}\label{sec:relativistic-primer}
In this section we give a short introduction on relativistic stable processes. We first recall the definition of this family of processes. Then we will give some kernel bounds indicating how relativistic processes transition, as $t$ increases, from an $\alpha$-stable behavior to a Brownian type behavior (this property being crucial to model quasiballistic heat dynamics in semiconductors).

\subsection{Characteristics and Feynman-Kac formula}\label{sec:chr-FK}
For any $M>0$ and $\al \in (0,2]$, a relativistic $\alpha$-stable process $X^M$ on $\R^d$ with a positive weight $M$ is a L\'evy process whose characteristic function is given, for any $t \geq 0$ and $\xi \in \R^d$, by:
\begin{equation}\label{eq:phi-m}
\phi_{M}(\xi)
\equiv
\mathbb{E}\lc \exp\lp \imath \xi \cdot (X_t^M - X_0^M)\rp\rc 
= \exp\lp -t\lp \lp |\xi|^2 + M^{2/\alpha}\rp^{\alpha/2}-M\rp\rp.
\end{equation}
When $M=0$, $X^M$ is simply a rotationally symmetric $\al$-stable process in $\R^d$. When $\al=2$, regardless of the value of $M$, the process is a Brownian motion. The infinitesimal generator of $X^M$ is {$\cl^{M} = M-(-\Delta + M^{2/\al})^{\al/2}$}. When $\al = 1$, this reduces to the free relativistic Hamiltonian $M-\sqrt{-\Delta + M^{2}}$, which explains the name of the process. An explicit expression for the L\'evy measure of $X^{M}$ can be found in \cite{Ch-12, AOP-Ch-12, Ch-11}. We omit this formula for sake of conciseness, since it will not be used in the remainder of the paper.

{ L\'evy processes like $X^M$ are classically used in order to represent solutions of deterministic PDEs. In our case, consider the following equation governing the temperature $T$ in our material:
\beq\label{eq:PDE}
\partial_t T(t,x) = \cl^M T(t,x), \text{ with } T(0,x) = T_0(x).
\eeq
Then it is a well known fact (see \cite{Applebaum}) that the solution $T$ to \eqref{eq:PDE} can be represented by the Feynman-Kac formula \eqref{eq:FK}, where the process $X^M$ is our relativistic $\alpha$-stable process. The Feynman-Kac representation is crucial in order to get equation~\eqref{eq:SPR} below.
}

\subsection{Transition kernel estimates}\label{sec:estimates}
In this subsection, we identify the behavior of a relativistic stable process with a stable process on short time scales and a Brownian motion on larger time scales. As mentioned above, this will be achieved by observing the patterns exhibited by the transition kernel of $X_{t}^{m}$. Some results will be stated without formal proof, and interested readers are referred to \cite{Ch-12, AOP-Ch-12, Ch-11} for more details.

Since $X^{M}$ is a Levy process, it is also a Markov process. As such it admits a transition kernel $p_{t}^{M}$, defined by:
\begin{equation*}
\PP\lp X_{s+t}^{M} \in A \, | \, X_{s} = x  \rp 
=
\int_{A} p_{{t}}^{M}(x,y) \, dy,
\end{equation*}
for all $x\in\R^{d}$ and $A\subset \R^{d}$. Notice that $p^{M}$ is related to the function $\phi_{M}$ (see definition \eqref{eq:phi-m}) as follows:
\begin{equation}\label{eq:kernel-as-inverse-fourier}
p_{{t}}^{M}(x,y)
=
\cf^{-1} \phi_{M}(x-y)
=
\frac{1}{(2 \pi)^d}
\int_{\R^d} e^{\imath (x-y)\cdot \xi} e^{-t\lbrace (M^{2/\alpha} + |\xi|^2)^{\alpha/2}-M \rbrace} d\xi.
\end{equation}

We start by a simple bound on $p^M_t$, exhibiting the stable behavior for small times and the Brownian behavior for large times. Observe that this bound does not depend on the space variables $x,y$. We include its proof, which is based on elementary considerations involving Fourier transforms, for sake of completeness.

\begin{theorem}\label{thm:transition-stable-to-brownian}
Consider a relativistic $\alpha$-stable process  $X^M$, and let $p^{M}$ be its transition kernel. 
Then there exists $c_1=c_1(\alpha)>0$ such that for all $t>0$ and all $x,y\in\R^{d}$:
\begin{equation}\label{ub_1}
p^M_t(x,y) \leq c_1(M^{d/\alpha-d/2}t^{-d/2} + t^{-d/\alpha}) .
\end{equation}
\end{theorem}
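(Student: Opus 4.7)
The plan is to start from the Fourier representation \eqref{eq:kernel-as-inverse-fourier} and bound the integrand by its modulus. Since $|e^{\imath(x-y)\cdot\xi}|=1$, one immediately obtains the space-independent estimate
\[
p_t^M(x,y) \leq \frac{1}{(2\pi)^d}\int_{\R^d} e^{-t\{(M^{2/\alpha}+|\xi|^2)^{\alpha/2}-M\}}\,d\xi,
\]
so the task reduces to estimating this Fourier-type integral in a way that exposes both the $\alpha$-stable and the Brownian regimes simultaneously.

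The key idea is to split the frequency domain at the threshold $|\xi|^\alpha\sim M$, which is precisely the scale where the L\'evy symbol $(M^{2/\alpha}+|\xi|^2)^{\alpha/2}-M$ crosses over from quadratic Gaussian-type behavior to $|\xi|^\alpha$ stable-type behavior. Concretely, I would use the dividing surface $|\xi|^\alpha=2M$. On the high-frequency piece $\{|\xi|^\alpha\geq 2M\}$, one has $(M^{2/\alpha}+|\xi|^2)^{\alpha/2}\geq |\xi|^\alpha$ trivially, and $M\leq |\xi|^\alpha/2$ by definition of the region, producing the pointwise lower bound $(M^{2/\alpha}+|\xi|^2)^{\alpha/2}-M\geq |\xi|^\alpha/2$. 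The corresponding contribution is then controlled by $\int_{\R^d}e^{-t|\xi|^\alpha/2}\,d\xi$, which scales as $t^{-d/\alpha}$ via the standard change of variables $\xi=t^{-1/\alpha}\eta$, producing the second term of \eqref{ub_1}.

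On the low-frequency piece $\{|\xi|^\alpha\leq 2M\}$, I would exploit the concavity of $v\mapsto v^{\alpha/2}$ (for $\alpha\leq 2$) applied to $f(u)=(M^{2/\alpha}+u)^{\alpha/2}$. Its derivative $f'(u)=(\alpha/2)(M^{2/\alpha}+u)^{\alpha/2-1}$ is nonincreasing, so its infimum over $u\in[0,(2M)^{2/\alpha}]$ is of order $M^{1-2/\alpha}$; by the mean value theorem this yields $(M^{2/\alpha}+|\xi|^2)^{\alpha/2}-M\geq c_\alpha M^{1-2/\alpha}|\xi|^2$ throughout this region. Extending the integration domain back to all of $\R^d$ reduces the estimate to a Gaussian integral, which evaluates to a constant multiple of $(tM^{1-2/\alpha})^{-d/2}=M^{d/\alpha-d/2}t^{-d/2}$, giving the first term of \eqref{ub_1}. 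Summing the two contributions completes the argument.

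Conceptually, the substantive step is identifying the correct threshold $|\xi|^\alpha\sim M$ and recognizing the two regimes hidden inside the single symbol; once the split is chosen, the remaining estimates are elementary applications of monotonicity and the mean value theorem, so I do not expect a serious obstacle. A worthwhile sanity check is the boundary case $\alpha=2$: then $M^{d/\alpha-d/2}=1$, the symbol reduces to $|\xi|^2$, and both contributions collapse to the standard Gaussian bound $t^{-d/2}$, consistent with the Brownian limit noted after \eqref{eq:phi-m}.
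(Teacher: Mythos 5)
Your proposal is correct and follows essentially the same route as the paper: bound the oscillatory factor by $1$, split frequency space at the crossover scale $|\xi|\sim M^{1/\alpha}$, prove a Gaussian-type lower bound $c_\alpha M^{1-2/\alpha}|\xi|^2$ on the symbol at low frequencies and a stable-type lower bound $c|\xi|^\alpha$ at high frequencies, and integrate each piece to obtain the two terms of \eqref{ub_1}. The only cosmetic difference is that the paper derives both regime bounds from the single concavity inequality $z^{\alpha/2}-(\alpha/2)z\leq 1-\alpha/2$, whereas you use the mean value theorem and a direct monotonicity argument; both are sound.
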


\begin{proof}
The strategy of our proof is based on the fact that the characteristic function $\phi_{M}$ defined by \eqref{eq:phi-m} behaves like a Gaussian characteristic function for low frequencies, and like an $\al$-stable characteristic function for high frequencies. We shall quantify this statement below.

\noindent
\textit{Step 1: Elementary inequalities:}
Let $\beta=\frac{\al}{2}$. The following inequality, valid for for $0 \leq z \leq 1$, is readily checked:
\begin{equation}\label{a1}
z^{\beta} - \beta z \leq 1 - \beta. 
\end{equation}
Substituting $z=\frac{M^{2/\alpha}}{|\xi|^2 + M^{2/\alpha}}$ in \eqref{a1}, we thus have:
\begin{equation*}
\dfrac{M}{(|\xi|^2 + M^{2/\alpha})^{\alpha/2}} - \dfrac{\alpha \, M^{2/\alpha}}{2(|\xi|^2 + M^{2/\alpha})} 
\leq 
1 - \frac{\alpha}{2} ,
\end{equation*}
which yields:
\begin{equation}\label{a2}
\lp |\xi|^2 + M^{2/\alpha}\rp^{\alpha/2} - M 
\geq 
\dfrac{\alpha |\xi|^2}{2(|\xi|^2 + M^{2/\alpha})^{1 - \alpha/2}}.
\end{equation}
Relation \eqref{a2} prompts us to split the frequency domain in two sets:
\beq\label{eq:dom-split}
A_1 = \lbrace \xi : |\xi|^2  \leq M^{2/\alpha}  \rbrace,
\quad\text{and}\quad
A_2 = \lbrace \xi :  |\xi|^2 > M^{2/\alpha}  \rbrace.
\eeq
Accordingly, we get the following lower bounds:
\begin{equation}\label{eq:a2}
(|\xi|^2 + M^{2/\alpha})^{\alpha/2} - M 
\geq
\begin{cases}
\dfrac{\alpha}{2(2M^{2/\alpha})^{1-\alpha/2}} \, |\xi|^2,  &{\text{when } \xi \in A_{1}} \\
\dfrac{\alpha}{2^{2-\alpha/2}} \, |\xi|^{\alpha}, &{\text{when } \xi \in A_{2}}.
\end{cases}
\end{equation}
This relation summarizes the separation between an $\al$-stable and a Gaussian regime alluded to above.

\noindent
\textit{Step 2: Consequence for the transition kernel.}
Recall relation \eqref{eq:kernel-as-inverse-fourier} for $p^{M}$, that is:
\begin{equation*}
p_{s}^{M}(x,y)
=
\frac{1}{(2 \pi)^d}
\int_{\R^d} e^{\imath (x-y)\cdot \xi} e^{-t\lbrace (M^{2/\alpha} + |\xi|^2)^{\alpha/2}-M \rbrace} d\xi.
\end{equation*}
In the integral above, we simply bound $|e^{\imath (x-y)\cdot \xi}|$ by 1 and split the integration domain $\R^{d}$ into $A_{1}\cup A_{2}$. Taking our relation \eqref{a2} into account, this yields:
\begin{equation}\label{a3}
p^M_t(x,y) 
 \leq \frac{1}{(2 \pi)^d} \int_{A_1} e^{-\frac{\alpha  t}{2(2M^{2/\alpha})^{1-\alpha/2}} \, |\xi|^2} d\xi \\
 + \frac{1}{(2 \pi)^d} \int_{A_2} e^{-\frac{\alpha t}{2^{2-\alpha/2}} \, |\xi|^{\alpha}} d\xi
 \leq  
I_{t}^{1} + I_{t}^{2},
\end{equation}
where 
\begin{eqnarray*}
I_{t}^{1} & = &
\frac{1}{(2 \pi)^d} \left[ \frac{1}{t^{d/2}} \left(\frac{(2M^{2/\alpha -1})}{\alpha/2}\right)^{d/2}
\int_{\R^d} e^{-|\xi|^2} d\xi \right] \\
I_{t}^{2}  &=& \frac{1}{(2 \pi)^d} \left[ \frac{1}{t^{d/\alpha}} \lp\frac{(2^{2-\alpha/2})}{\alpha}\rp^{d/\alpha}\int_{\R^d} e^{-|\xi|^{\alpha}} d\xi \right] .
\end{eqnarray*}
It is now easily checked that
\begin{equation*}
I_{t}^{1} = \frac{c_{2}}{t^{d/2}},
\quad\text{and}\quad
I_{t}^{2} = \frac{c_{3}}{t^{d/\alpha}}.
\end{equation*}
Plugging this information into \eqref{a3}, our claim \eqref{ub_1} follows.
\end{proof}

The upper bound \eqref{ub_1} already captures a lot of the information we need on relativistic stable processes. Invoking sophisticated arguments based on stopping times and Dirichlet forms, one can get upper and lower bounds on the transition kernel $p^{M}$ involving some exponential decay in the space variables $x,y$. We summarize those refinements in the following theorem.

\begin{theorem}
Let $p^{M}$ be the transition kernel defined by \eqref{eq:kernel-as-inverse-fourier}. Then the following estimates hold true.

\noindent
\emph{(i) Small time estimates.}
Let $T>0$ be a fixed time horizon. Then there exists $C_1>0$ such that  for all $t \in (0,T]$ and $x,y \in \R^d$,
\begin{equation}\label{eq:small-time-bnd}
C_{1}^{-1}\lp t^{-d/\alpha} \wedge  \dfrac{t e^{-M^{1/\al}|x-y|}}{|x-y|^{{(d+\al+1)}/{2}}} \rp \leq p^M_t(x,y) \leq C_{1}\lp t^{-d/\alpha} \wedge  \dfrac{t e^{-M^{1/\al}|x-y|}}{|x-y|^{{(d+\al+1)}/{2}}} \rp .
\end{equation}

\noindent
\emph{(ii) Large time estimates.}
There exists $C_2 \geq 1$ such that for every  $t \in [1, \infty)$ and $x,y \in \R^d$,
\begin{equation}
C_2^{-1} t^{-d/2} \leq p^M_t(x,y) \leq C_2 t^{-d/2} .
\end{equation}
\end{theorem}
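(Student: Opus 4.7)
The plan is to combine the Fourier representation \eqref{eq:kernel-as-inverse-fourier} with the explicit asymptotics of the L\'evy measure of $X^M$. Recall from the references cited in the excerpt that the L\'evy density $J^M$ of $X^M$ satisfies a two-sided estimate of the form
\begin{equation*}
J^M(z) \;\asymp\; \frac{1}{|z|^{d+\al}}\,\1_{\{|z|\leq 1\}} \;+\; \frac{e^{-M^{1/\al}|z|}}{|z|^{(d+\al+1)/2}}\,\1_{\{|z|>1\}}.
\end{equation*}
Part (i) then reflects the combination of the on-diagonal $\al$-stable scaling $t^{-d/\al}$ (inherited from the small-jump part of $\cl^M$) with an off-diagonal decay governed by $J^M$ via the L\'evy system formula. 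Part (ii) reflects the fact that for $|\xi|\lesssim M^{1/\al}$ the exponent in \eqref{eq:phi-m} is comparable to $\tfrac{\al}{2} M^{1-2/\al} |\xi|^2$, so that for $t\geq 1$ the process is globally comparable to a Brownian motion with a rescaled variance.

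For the small-time upper bound in (i), I would first localize. When $|x-y|\leq c\, t^{1/\al}$ the Fourier representation, handled exactly as in the proof of Theorem~\ref{thm:transition-stable-to-brownian}, yields $p_t^M(x,y)\leq C\, t^{-d/\al}$. When $|x-y|> c\, t^{1/\al}$ one applies a Meyer-type decomposition, writing $\cl^M = \cl^{(1)} + \cl^{(2)}$ where $\cl^{(1)}$ accounts for jumps of size $|z|\leq |x-y|/2$ and $\cl^{(2)}$ for the remainder; the heat kernel of $\cl^{(1)}$ still satisfies the diagonal bound, and a single large-jump estimate contributes a factor $t\, J^M(x-y)$, giving the stated off-diagonal decay. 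For the matching small-time lower bound, the on-diagonal piece $p_t^M(x,y)\geq c\, t^{-d/\al}$ on $|x-y|\leq c\, t^{1/\al}$ follows from a Nash-type argument combined with the semigroup property and the upper bound; the off-diagonal piece follows from the Ikeda--Watanabe / L\'evy system formula, since the probability of reaching a small ball near $y$ from $x$ in time $t$ by a single large jump is at least of order $t\, J^M(x-y)$.

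For part (ii), the upper bound is already essentially contained in \eqref{ub_1}: for $t\geq 1$ the Gaussian term $M^{d/\al-d/2}\, t^{-d/2}$ dominates the stable term $t^{-d/\al}$ up to a constant depending on $M$. The matching lower bound follows from \eqref{eq:kernel-as-inverse-fourier} by restricting the integral to the low-frequency region $\{|\xi|\leq M^{1/\al}\}$ and using a Taylor expansion of $(|\xi|^2 + M^{2/\al})^{\al/2}$ around $\xi=0$ to bound the characteristic exponent from above by a multiple of $t |\xi|^2$; this yields the Gaussian-type lower bound of order $t^{-d/2}$ once $t\geq 1$.

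The main obstacle will be the off-diagonal lower bound in part (i), where the precise polynomial correction $|x-y|^{(d+\al+1)/2}$ in the denominator (rather than the naive $|x-y|^{d+\al}$ one would inherit from a crude stable-type L\'evy system argument) must be produced. This is the step that forces us to track the exact large-$|z|$ asymptotics of $J^M$ and to combine them with the stopping-time and Dirichlet-form techniques developed in \cite{Ch-12, AOP-Ch-12, Ch-11}, rather than relying solely on the elementary Fourier estimates used in Theorem~\ref{thm:transition-stable-to-brownian}.
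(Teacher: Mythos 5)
First, a point of comparison: the paper offers no proof of this theorem. It is one of the results "stated without formal proof," with the reader referred to \cite{Ch-12, AOP-Ch-12, Ch-11}, so your sketch can only be measured against the strategy of those references. For part (i) your roadmap is indeed the one used there: the two-sided bound on the L\'evy density $J^M$ (stable-like $|z|^{-d-\al}$ near the origin, $e^{-M^{1/\al}|z|}|z|^{-(d+\al+1)/2}$ at infinity), the on-diagonal bound $t^{-d/\al}$ from the Fourier representation, a Meyer-type decomposition for the off-diagonal upper bound, and Nash-type plus L\'evy-system arguments for the lower bound. As a roadmap this is sound, though every substantive step is delegated to the references. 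One issue you should confront rather than defer to your final paragraph: the Meyer/L\'evy-system machinery produces the sharp form $t^{-d/\al}\wedge t\,J^M(x-y)$, and in the intermediate range $t^{1/\al}\lesssim |x-y|\lesssim 1$ this equals $t|x-y|^{-(d+\al)}$, which is \emph{not} comparable to the right-hand side of \eqref{eq:small-time-bnd}: the ratio is $|x-y|^{-(d+\al-1)/2}$, which blows up as $|x-y|\to 0$. So the uniform "$(d+\al+1)/2$ exponent" you set out to "produce" everywhere is not something any correct argument can deliver; the displayed estimate agrees with the sharp $t\,J^M$ form only for $|x-y|$ bounded away from $0$, and your write-up should state which form it actually proves.

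The clear-cut gap is in part (ii). You propose to get the lower bound by restricting the integral in \eqref{eq:kernel-as-inverse-fourier} to the low-frequency region $\{|\xi|\leq M^{1/\al}\}$. This is invalid for $x\neq y$: the integrand carries the oscillating factor $e^{\imath (x-y)\cdot\xi}$ and is not nonnegative, so truncating the domain of integration does not yield a lower bound. Your argument proves only the on-diagonal estimate $p^M_t(x,x)\geq c\,t^{-d/2}$. Moreover, a uniform lower bound $p^M_t(x,y)\geq C_2^{-1}t^{-d/2}$ for \emph{all} $x,y\in\R^d$ is impossible, since integrating it in $y$ over $\R^d$ contradicts $\int_{\R^d} p^M_t(x,y)\,dy=1$; the sharp large-time estimate necessarily carries a spatial factor (Gaussian $e^{-c|x-y|^2/t}$ for $|x-y|\lesssim t$, exponential $e^{-cM^{1/\al}|x-y|}$ beyond), and the clean two-sided $t^{-d/2}$ bound holds only in a near-diagonal regime such as $|x-y|\lesssim \sqrt{t}$. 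Even for that restricted version the Fourier formula alone does not suffice: the standard route is the on-diagonal lower bound you obtained, a near-diagonal extension via a parabolic Harnack inequality or a hitting-probability estimate, and then a chaining argument along a sequence of balls, which is exactly the "stopping times and Dirichlet forms" machinery the paper alludes to.
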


\section{Application of relativistic stable processes to thermal modelling}

In this section we show how to apply the mathematical formalism of Section \ref{sec:relativistic-primer} to our concrete physical setting. More specifically, in Section \ref{sec:thermal-prop} we shall introduce length scales in our L\'evy exponent \eqref{eq:phi-m}. Then Section \ref{sec:tdtr-measurement} is devoted to a description of our experimental setting, and also relates our measurements to the Fourier exponents we have put forward. 

\subsection{Formulation in terms of material thermal properties}\label{sec:thermal-prop}
The aforementioned evolution of relativistic processes, from alpha-stable behaviour at short length and time scales to regular Brownian motion at longer scales, renders them suitable to describing quasiballistic thermal transport in semiconductor alloys. Let us consider such a material, having nominal thermal diffusivity $D = \kappa/C_v$ with $\kappa$ being the thermal conductivity and $C_v$ the volumetric heat capacity (in { ${\rm Jm}^{-3}{\rm K}^{-1}$}). 
{ The physical quantity we have access to is a slight variation of the function $T(t,x)$ defined by \eqref{eq:FK}. Specifically the single pulse response for the d-dimensional excess thermal energy can be expressed as $P(t,x) = C_v \, \Delta T(t,x)$. Under the L\'evy flight paradigm the Fourier transform of $P$ is written as
\beq\label{eq:SPR}
P(| \xi | , t) = \exp \lp- t \, \psi(| \xi |) \rp,
\eeq
for a given L\'evy exponent (also called special heat propagator) $\psi$. For the relativistic case under study here, this spatial heat propagator is simply a multiple of the function $\phi_M$ introduced in \eqref{eq:phi-m}. Namely $\psi$ reads}
\beq\label{eq:cf-D_al}
\psi(| \xi |) = D_{\alpha} \, \left[ \left(| \xi |^2 + M^{2/\alpha} \right)^{\alpha/2} - M \right]. 
\eeq
The prefactor $D_{\alpha}$ with unit m$^{\alpha}$/s denormalises the characteristic function for dimensionless space and time variables defined by Eq. (\ref{eq:phi-m}) to its physical counterpart, and denotes the fractional diffusivity of the alpha-stable regime as we shall see shortly.
\par
For thermal modelling purposes it is furthermore convenient to reformulate the process mass $M$, which has an exponent-dependent unit 1/m$^{\alpha}$, in terms of an associated characteristic length scale $x_0$ around which the transition from alpha-stable (L\'evy) to Brownian dynamics takes place{ . Notice that according to our previous analysis leading up to Eq. \eqref{eq:dom-split} and \eqref{eq:a2}} this length should be given by:
\begin{equation}
x_0 = | \xi_0 |^{-1} = M^{-1/\alpha}.
\end{equation}
This means that expression \eqref{eq:cf-D_al} can be recast as
\begin{align}\label{eq:mod-M}
\psi(| \xi |) &= D_{\alpha} \, \left[ \left(| \xi |^2 + | \xi_0 |^2 \right)^{\alpha/2} - | \xi_0 |^{\alpha} \right] \nonumber \\
&= D_{\alpha} \, | \xi_0 |^{\alpha} \, \left[ \left( \tilde{\xi}^2 + 1 \right)^{\alpha/2} - 1 \right],  
\end{align}
where $\tilde{\xi} \equiv | \xi | / | \xi_0|$.
With those values of $M$ and $\xi_0$ in hand, we can translate \eqref{eq:a2} into an asymptotic transport limit as follows:
\begin{align}
\text{alpha-stable regime } \tilde{\xi} \gg 1&:~\psi(| \xi |) \simeq D_{\alpha} \, | \xi |^{\alpha}, \nonumber \\
\text{Brownian regime } \tilde{\xi} \ll 1&:~\psi(| \xi |) \simeq \frac{\alpha D_{\alpha}}{2 | \xi_0 |^{2-\alpha}} \, | \xi |^2. \label{eq:brown-regime}
\end{align}
The former corresponds to L\'evy superdiffusion with characteristic exponent $\alpha$ and fractional diffusivity $D_{\alpha}$; the latter should recover to nominal diffusive transport { $\psi (|\xi|) \equiv D | \xi |^2$. In order to make the last relation compatible with \eqref{eq:brown-regime} we must set}
\beq\label{eq:D-al}
\frac{\alpha D_{\alpha}}{2 | \xi_0 |^{2-\alpha}} = D \quad \Longrightarrow \quad D_{\alpha} = \frac{2D}{\alpha x_0^{2-\alpha}}.
\eeq
Finally, { plugging \eqref{eq:D-al} into \eqref{eq:mod-M}} the heat propagator reads
\begin{equation}
\psi(| \xi |) = \frac{2D}{\alpha x_0^2} \, \left[ \left(1 + x_0^2 | \xi |^2 \right)^{\alpha/2} - 1 \right]. \label{propagator}
\end{equation}
This formulation contains 3 material dependent parameters, each with an intuitive physical meaning: the characteristic exponent $\alpha$ of the alpha-stable regime; the nominal diffusivity $D$ of the Brownian regime; and the characteristic length scale $x_0$ around which the transition between those two asymptotic limits occurs (Fig. \ref{figure_propagator}). In the sections that follow, we determine these parameter values for In$_{0.53}$Ga$_{0.47}$As by fitting a thermal model built upon the propagator (\ref{propagator}) to time-domain thermoreflectance (TDTR) measurement signals.
\myfig[!htb]{width=0.7\linewidth}{figure_propagator}{Transition between Brownian and alpha-stable L\'evy behavior.} 

\subsection{Modelling of TDTR measurement signals}\label{sec:tdtr-measurement}
The central principle in TDTR is to heat up the sample with ultrashort \textit{pump} laser pulses, and then monitor the thermal transient decay using a \textit{probe} beam. Pulses from the laser are split into a pump beam and probe beam. The pump pulses pass through an electro-optic modulator (EOM) before being focused onto the sample surface through a microscope objective. A thin (50-100 nm) aluminium film is deposited onto the sample to act as measurement transducer: the metal efficiently absorbs the pump light and converts it to heat, and translates temperature variations to changes in surface reflectivity which can be captured by the probe. Lock-in detection at the pump modulation frequency $f_{\text{mod}}$ resolves the thermally induced reflectivity changes captured by the probe beam. A mechanical delay stage allows to vary the relative arrival time of the pump and probe pulses at the sample with picosecond resolution. To minimise the impact of random fluctuations in laser power and the variation of the pump beam induced by the delay stage, thermal characterisation is performed not on the raw lock-in signal itself but rather the ratio $-V_{\text{in}}/V_{\text{out}}$ of the in-phase and out-of-phase components as a function of the pump-probe delay.
\par
Theoretical ratio curves $-V_{\text{in}}/V_{\text{out}}$ can be computed semi-analytically through mathematical manipulation of the semiconductor single-pulse response (\eqref{eq:SPR}), as described in detail in Refs. \cite{cahillRSI,schmidt,JAPmultiD}. Briefly, we first obtain the surface temperature response of a semi-infinite semiconductor to a cylindrically symmetric energy input via Fourier inversion of (\eqref{eq:SPR}) with respect to the cross-plane coordinate. Next, a matrix formalism that accounts for heat flow in the metal transducer and across the intrinsic thermal resistivity $r_{\text{ms}}$ (in K-m$^2$/W) of the metal-semiconductor interface provides the temperature response, weighted by the Gaussian probe beam, of the transducer top surface induced by a Gaussian pump pulse. Finally, harmonic assembly of this response at frequencies $n \cdot f_{\text{rep}} \pm f_{\text{mod}} \quad (\text{for } n = 0,1,\ldots)$ accounting for the laser repetition rate $f_{\text{rep}}$, pump modulation frequency $f_{\text{mod}}$, and phase factors induced by the pump-probe delay $\tau$ yields the theoretical lock-in ratio signal $-V_{\text{in}}/V_{\text{out}}(\tau)$.
\section{Experimental analysis}
We have applied our model to TDTR measurements taken on a $\simeq$2 micron thick film of In$_{0.53}$Ga$_{0.47}$As ($C_v \simeq 1.55\,$MJ/m$^3$-K) that was MBE-grown on a lattice-matched InP substrate. We note that although the semiconductor alloy under study (the InGaAs layer) is a geometrically thin film, thermally speaking it can still be considered, as is assumed by the thermal model, as a semi-infinite layer with good approximation. This is because the effective thermal penetration length $\ell = \sqrt{D/(\pi f_{mod})}$ stays firmly within the film over the experimentally probed modulation range $0.8 \, \text{MHz} \lesssim f_{mod} \lesssim 18\, \text{MHz}$. The aluminium transducer deposited onto the sample measured 64$\,$nm in thickness as determined by picoseconds accoustics. We used pump and probe beams with $1/e^2$ radii at the focal plane of 6.5 and 9 microns respectively, with respective powers of 17 and 8 mW at the sample surface.
\par
In the thermal model with relativistic stable heat propagator (\ref{propagator}), we fixed the heat capacity at the aforementioned 1.55$\,$MJ/m$^3$-K. Theoretical ratio curves were then collectively fitted through nonlinear least-square optimisation to signals measured at 7 different modulation frequencies to identify the 4 key thermal parameters: the characteristic exponent $\alpha$ of the L\'evy superdiffusion regime; the quasiballistic-diffusive transition length scale $x_0$ associated to the mass $M$; the nominal thermal conductivity $\kappa = C_v D$ of the diffusive regime; and the thermal resistivity $r_{ms}$ of the transducer/semiconductor interface. The resulting best-fitting values $\alpha$ = 1.695 , $x_0$ = 0.86$\,\mu$m , $\kappa$ = 5.82$\,$W/m-K , $r_{ms}$ = 4.28$\,$nK-m$^2$/W yield an excellent agreement with the measured signals (Fig. \ref{ratiocurves_bestfit}). Theoretical curves with parameter values deviating from the best fitting ones (Fig. \ref{ratiocurves_tolerance}) furthermore visually reveal the sensitivity to each of the parameters and illustrate the good quality of the best fit.
\myfig[!htb]{width=0.7\linewidth}{ratiocurves_bestfit}{TDTR characterisation of Al/InGaAs sample: a thermal model based on a relativistic stable random motion of heat (lines) provides an excellent fit to measured signals (symbols).} 
\begin{figure*}[t]
\centering
\begin{subfigure}[b]{0.5\linewidth}
\includegraphics[width=\linewidth]{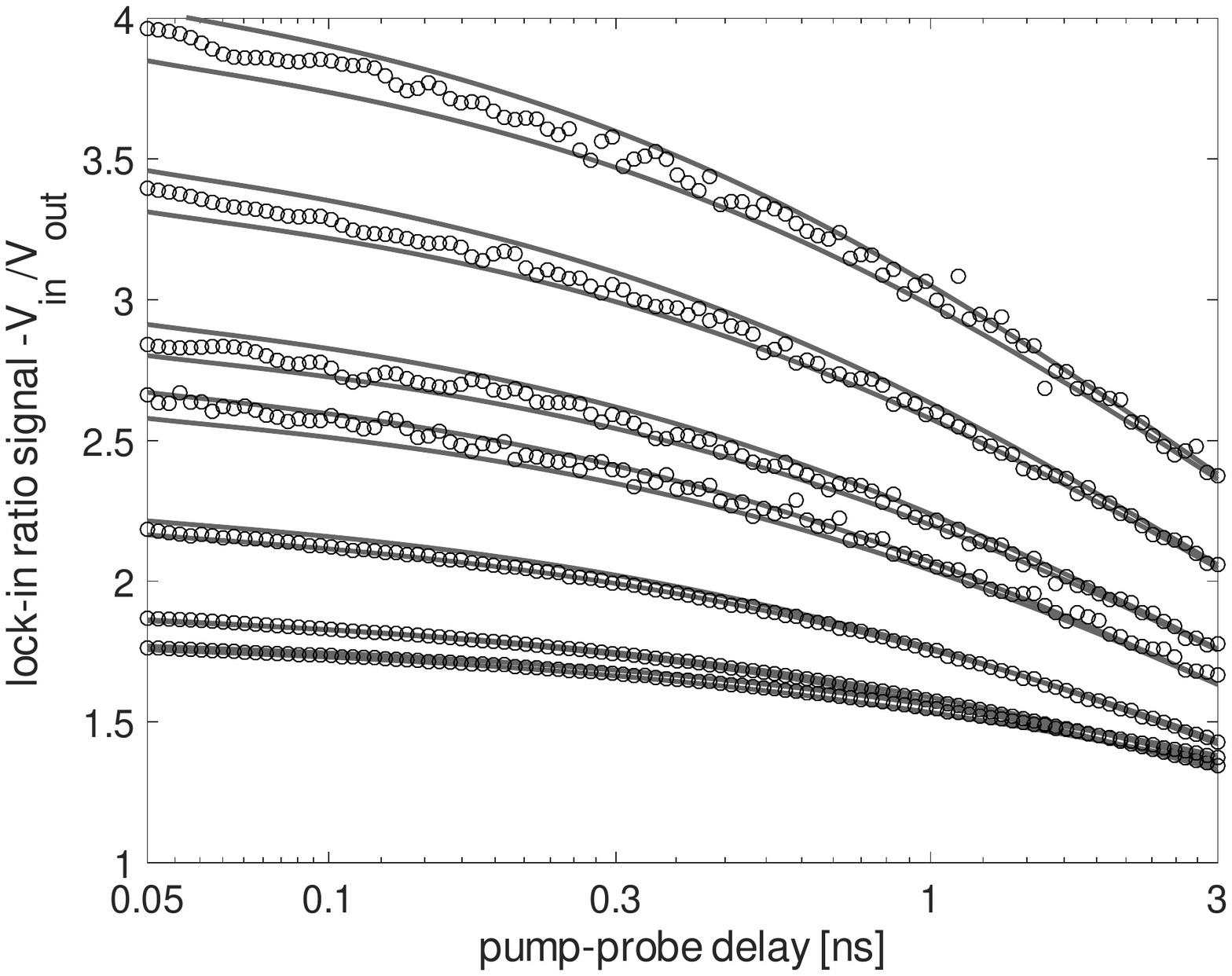}
\caption{$\alpha \pm 0.02$}
\label{tolerance-alpha}
\end{subfigure}%
\begin{subfigure}[b]{0.5\linewidth}
\includegraphics[width=\linewidth]{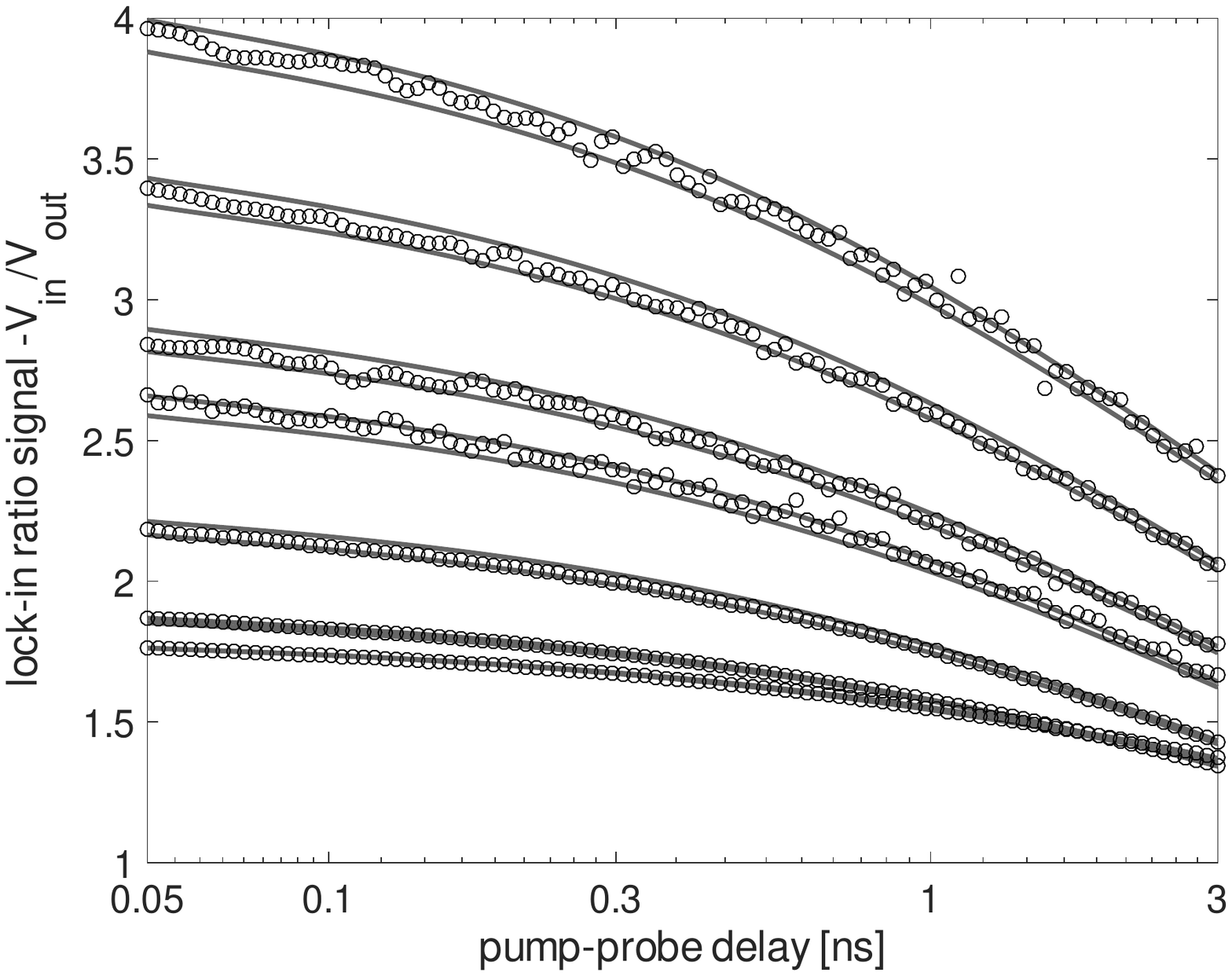}
\caption{$x_0 \pm 10 \%$}
\label{tolerance-x0}
\end{subfigure}%

\begin{subfigure}{0.5\linewidth}
\includegraphics[width=\linewidth]{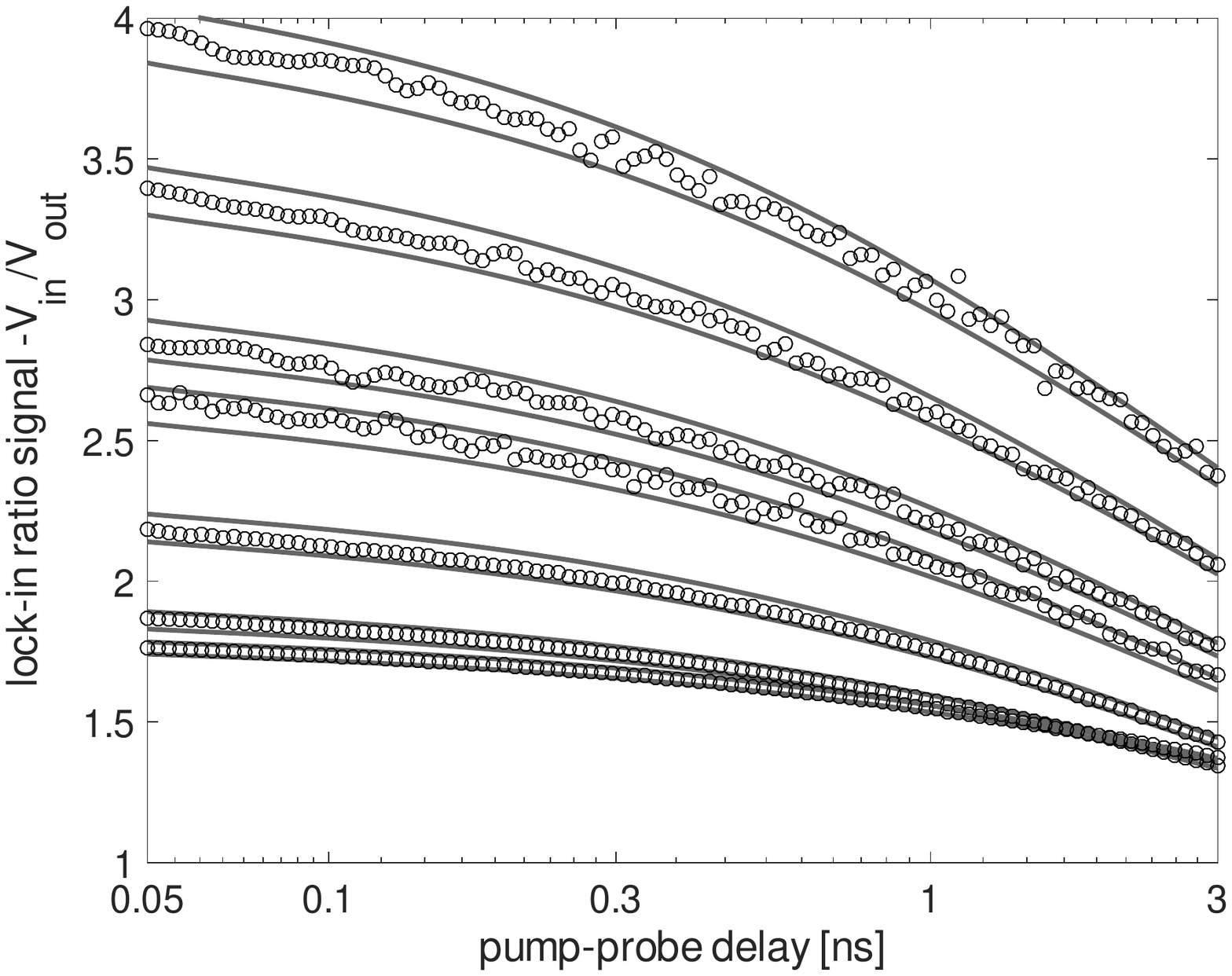}
\caption{$\kappa \pm 5\%$}
\label{tolerance-kappa}
\end{subfigure}%
\begin{subfigure}{0.5\linewidth}
\includegraphics[width=\linewidth]{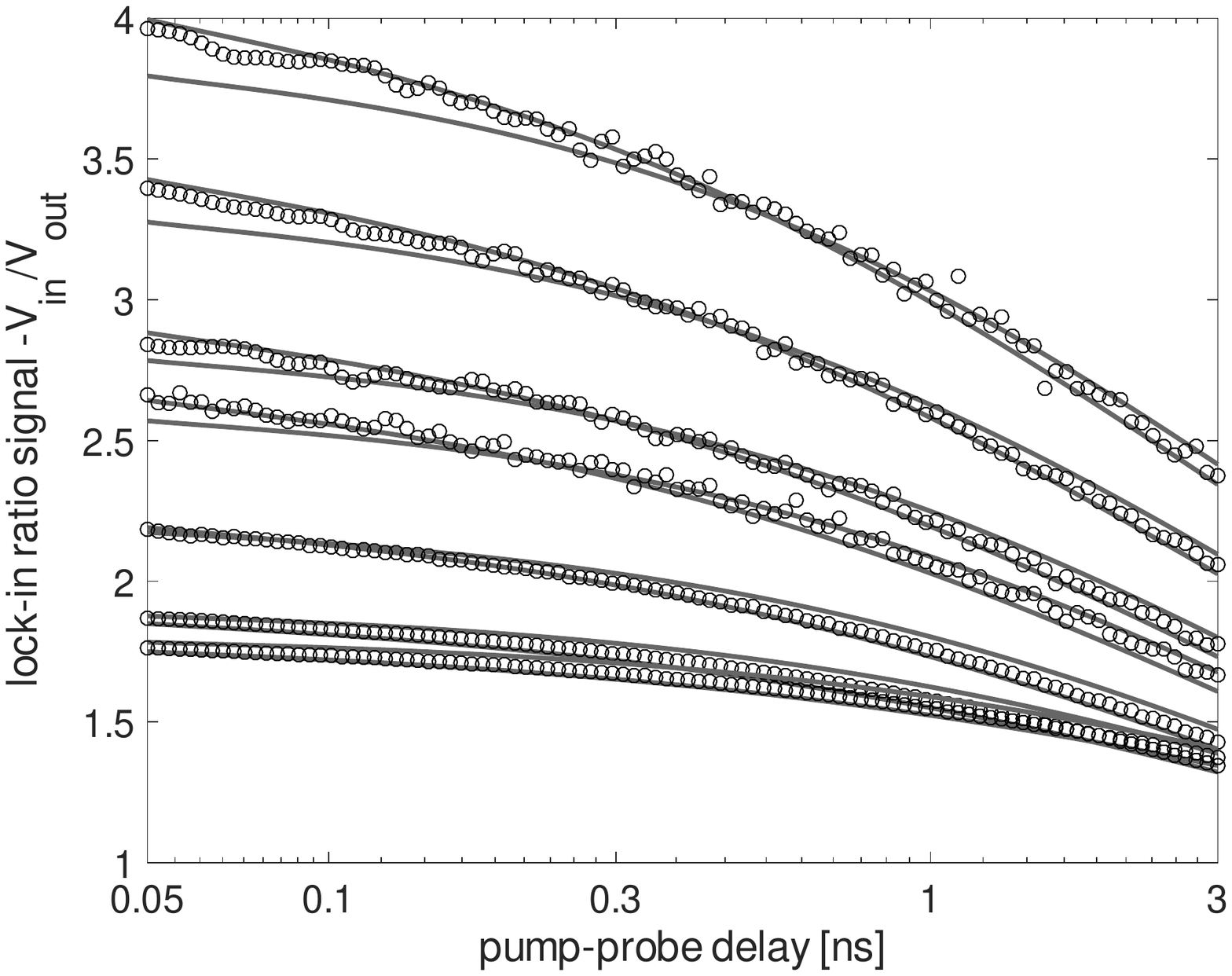}
\caption{$r_{ms} \pm 50 \%$}
\label{tolerance-rms}
\end{subfigure}%
\caption{Ratio curve fitting tolerance and sensitivity. The plotted theoretical curves are computed with sub-optimal parameter combinations in which one parameter deviates from its best fitting value as indicated.}
\label{ratiocurves_tolerance}
\end{figure*}
\section{Conclusions and Outlook}
Quasi-ballistic heat propagation in materials can be studied using atomic parameters through a multitude of techniques. First principle calculations and multi spectral phonon Boltzmann transport equations are very powerful in this regard. However, their use in the study of heat propagation in multi-layer/anisotropic materials and materials with complex geometries is limited. The Feynman-Kac representation of solutions to partial differential equations with non-local parameters can potentially provide alternative approaches to explain experimental thermal data. In this article we have replaced the traditional heat equation by a different PDE, whose solution has a Feynman-Kac representation driven by the so-called relativistic stable L\'evy process. The transition characteristics of this process is in harmony to the heat propagation behaviour exhibited by TDTR data. In general, numerical approximations of the PDE solution can also be achieved through Monte Carlo simulations of the driving stochastic process in the Feynman-Kac formula. In particular, these numerical computations may provide substitute techniques to optimize materials or source geometry in order to reduce heating from nanoscale and/or ultrafast devices.

Our next challenge in this direction will be to model multidimensional transport in multilayer structures. To this aim, we shall investigate two methods: (i) Monte Carlo simulation according to our Feynman-Kac representation \eqref{eq:FK}, taking into account jumps and change of media. (ii) Related PDEs involving the non local operator $\cl^{M} = M-(-\Delta + M^{2/\al})^{\al/2}$, with boundary terms corresponding to the different layers. Both methods rely crucially on the relativistic L\'evy representation advocated in this paper. They will be subject of future publications. 

\newpage

\bibliographystyle{plain}
\bibliography{references}

\end{document}